\newtheorem{theorem}{Theorem}
\newtheorem{definition}{Definition}
\begin{document}
\title{Matrix Bloom Filter: An Efficient Probabilistic Data Structure for 2-tuple Batch Lookup}
\author{
\IEEEauthorblockN{Yue Fu\IEEEauthorrefmark{1}\IEEEauthorrefmark{3}, Rong Du\IEEEauthorrefmark{1}\IEEEauthorrefmark{3}, Haibo Hu\IEEEauthorrefmark{3}, Man Ho Au\IEEEauthorrefmark{2}, Dagang Li\IEEEauthorrefmark{1}}\\
\IEEEauthorblockA{\IEEEauthorrefmark{1}School of ECE, Shenzhen Graduate School, Peking University, China}\\
\IEEEauthorblockA{\IEEEauthorrefmark{2}Department of Computing, The Hong Kong Polytechnic University, Hong Kong, China}\\
\IEEEauthorblockA{\IEEEauthorrefmark{3}Department of Electronic and Information Engineering, The Hong Kong Polytechnic University, Hong Kong, China}\\
Email: fuyuefyu@126.com, rongdu@pku.edu.cn, haibo.hu@polyu.edu.hk\\  csallen@comp.polyu.edu.hk, dagang.li@ieee.org
}
\maketitle

\begin{abstract} 
With the growing scale of big data, probabilistic structures receive increasing popularity for efficient approximate storage and query processing. For example, Bloom filters (BF) can achieve satisfactory performance for approximate membership existence query at the expense of false positives. However, a standard Bloom filter can only handle univariate data and single membership existence query, which is insufficient for OLAP and machine learning applications. In this paper, we focus on a common multivariate data type, namely, 2-tuples, or equivalently, key-value pairs. We design the matrix Bloom filter as a high-dimensional extension of the standard Bloom filter. This new probabilistic data structure can not only insert and lookup a single 2-tuple efficiently, but also support these operations efficiently in batches --- a key requirement for OLAP and machine learning tasks. To further balance the insertion and query efficiency for different workload patterns, we propose two variants, namely, the maximum adaptive matrix BF and minimum storage matrix BF. Through both theoretical and empirical studies, we show the performance of matrix Bloom filter is superior on datasets with common statistical distributions; and even without them, it just degrades to a standard Bloom filter. 

\indent \textbf{Keywords:} Bloom filters and hashing, batch processing, tuple query
\end{abstract}

\section{Introduction}With the advent of Internet of things, all devices around us are generating data in day-to-day life. The explosion of data and wide-adoption of machine intelligence result in billions of interactive or automated queries issued and executed per day on large web servers or databases. Due to the ever-increasing scale of data, probabilistic data structures that store raw data and execute queries in an approximate manner becomes more and more popular. A classic one is the Bloom filter\cite{1}, which was first proposed by Bloom in 1970. It is a bit array (initially set to 0) of length $m$, and is defined on $k$ uniform and independent hash functions. To insert an element, BF sets the $k$ positions from hash functions to 1. To query whether an element has been inserted, BF simply checks if all the $k$ positions corresponding to this element are set to 1. 

The standard Bloom filter can only handle univariate data. However, many data in modern OLAP and machine learning applications are multivariate. In this paper, we focus on a common multivariate data type, namely, 2-tuples, or equivalently, key-value pairs. Besides its wide applications in database and data mining, 2-tuple is also a popular data type in other fields. For example, in a content distribution network (CDN) the nodes (i.e., servers) are labelled as (A,B,C,...), and the contents are numbered by (1,2,3,...). A key problem in CDN is to determine whether a node has a copy of a content. This is equivalent to a lookup of a 2-tuple (node name, content number) in the CDN metadata. As another example, most log files in operating systems and web servers record a timestamp together with the log event. As such, each log entry can be considered as a 2-tuple in the form of (event, timestamp). 

To support 2-tuple insertion and query, we are tempted to adapt some prior work. Guo et al. proposed multi-dimensional Bloom filter (MDBF) for insertion and query of an $l$-dimensional element $(x_{1},x_{2},...,x_{l})$\cite{2}. MDBF allocates $l$ different standard Bloom filters of the same length, and insert each dimension into its corresponding Bloom filter. When querying an element, MDBF simply looks up each dimension in its corresponding Bloom filter. However, MDBF is not suitable for 2-tuple insertion and query. First and foremost, MDBF only performs membership test on each dimension independently, without keeping the correlation (e.g. co-existence) across dimensions. A key-value pair, however, needs the correlation to be stored and queried. Second, MDBF also leads to huge performance inefficiency because a false positive on any dimension always results in a false positive on the entire element. Last but not the least, MDBF cannot support both operations efficiently in batches, which is required in many OLAP and machine learning tasks. For instance, in CDN it is common to query on multiple contents such as ``Does node A cache contents 1, 4 and 6?". Likewise, a web administrator may want to know if three IPs 58.61.50.73, 152.168.50.76, and 125.153.42.12 have accessed the server from 9:00 am - 10:00am. To support batch operations, MDBF has to treat duplicate keys or duplicate values as special elements and insert them to different positions from where the first key or value is inserted. This makes the batch insertion and query inefficient.

To the best of the author's knowledge, there is no prior study on designing a probabilistic structure that could carry out batch lookups of 2-tuples. In this paper, we propose the matrix Bloom filter as an efficient solution to this problem. In a nutshell, it is a high-dimensional extension of the standard Bloom filter that performs insertions and queries on 2-tuples. Then we further propose two variants, namely, the minimum storage matrix and maximum adaptive matrix BF, that can balance the insertion and query efficiency for different workload patterns, and further exploit common statistical distributions of datasets. The contributions of this paper are summarized as follows:
\begin{enumerate}
\item \textbf{Matrix Bloom filter.} We present a unified framework to batch process 2-tuple partial existence queries by a novel data structure called matrix Bloom filter. It degrades to a standard Bloom filter for univariate data.
\item \textbf{Minimum storage/maximum adaptive matrix.} We propose two variants in the framework of matrix Bloom filter, i.e. minimum storage matrix and maximum adaptive matrix, for datasets with specific statistical features.
\item \textbf{Empirical studies.} We experimentally verify the correctness and effectiveness of matrix Bloom filter against some baseline approaches on two real-world datasets and one synthetic dataset.
\end{enumerate}

The rest of the paper is organized as follows. Section \uppercase\expandafter{\romannumeral2} formally defines the problem and introduces some baseline approaches. Section \uppercase\expandafter{\romannumeral3} proposes the framework of matrix Bloom filter. Section \uppercase\expandafter{\romannumeral4} proposes the maximum adaptive matrix and the minimum storage matrix in this framework. Section \uppercase\expandafter{\romannumeral5} discusses the experimental design and results. Section \uppercase\expandafter{\romannumeral6} reviews the related work. Section \uppercase\expandafter{\romannumeral7} draws the conclusion.

\section{Problem definition}
\subsection{2-tuple partial existence tests}As mentioned, a traditional Bloom filter does not perform batch processing efficiently, as it treats a multi-dimensional input as an entirety and loses the capability to identify common components that two different inputs may have. On the other side, even though MDBF\cite{2} is friendly for queries on a single component, it could still not solve this problem, since it loses the dependency between components, and thus is not able to tell whether those components belong to a same element. Hence, what we need is a methodology that is able to perform insertions/queries on each component orthogonally, without loss of their dependencies. We abstract this requirement as the following definition\footnote{For the convenience of discussion, only 2-D case is discussed in this paper. The extension to higher dimension is natural.}:

\begin{definition}
\textbf{2-tuple partial existence tests.} Given a 2-tuple denoted by $(x_{1},x_{2})$. A data structure is said to be able to perform 2-tuple partial existence tests, if there exists an insertion/query pattern, such that insertions/queries on $x_{1}$ and $x_{2}$ are conducted independently, while the dependency between components remains.
\end{definition}

Apparently, 2-tuple partial existence tests are friendly to batch queries on one component, since the queries are only to be operated once on the repeating component. Here are some existing typical examples that can be concluded into 2-tuple partial existence tests:\\
\textbf{Example 1. Key-value search.} Generically, hundreds of values can be corresponded to a key. Usually, we locate the key-value pairs through the key, and then search on the values, which is essentially a batch 2-tuple partial existence test.\\
\textbf{Example 2. Tmt-query.} Peng et al. proposed a question in literature \cite{3}: When an element comes in the form of (IP address, time), how to answer such question ``Is there any packet comes from a given IP address in the range of time $(t_{1},t_{2})$?"

Formally, the authors defined the notion of temporal membership testing: Given an upper bound $\mathbb{T}$ on the time dimension, and a universe $\mathbb{U}$ where elements are drawn from, a temporal set $\mathbb{A}$ consist from (element, timestamp) pairs up to time T (assuming discrete timestamps). Let $A\in\mathbb{A}$, and $A[s,e]$ refers to the subset of distinct elements in $\mathbb{A}[s,e]$. A temporal membership testing query (tmt-query) asks if $x\in A[s,e]$ for any query element $x$ chosen from $\mathbb{U}$.

It is easy to see, tmt-query fixes in the IP address, and then search on the range of a given discrete time interval, which is also essentially a batch 2-tuple partial existence test. 

\textbf{$\bullet$ Baseline approach: Using a hashmap.} A naive approach to answer partial existence tests on key-value pairs is the hashmap. Given a set of key-value pairs holding the same key and varying values, it is straightforward to store them in a hashmap structure. When queries are performed, the key is only required to be hashed a single time to find its bucket, which notably reduces the overall workload\footnote{A traditional Bloom filter is not able to work in this pattern, since two elements shall be mapped into two totally random positions, even if they hold a same key.}. See Fig. \ref{fig1} as an example of hashmap approach. A key-value pair finds its bucket according to the hash output of its key, and values with the same key are mapped into the same buckets. The items in a bucket are connected with a chain. The red item in the figure is the case of hash collision, which implies the hash output of key1 and key2 is the same.

\begin{figure}[!t]
  \centering
  \includegraphics[height=1.9in]{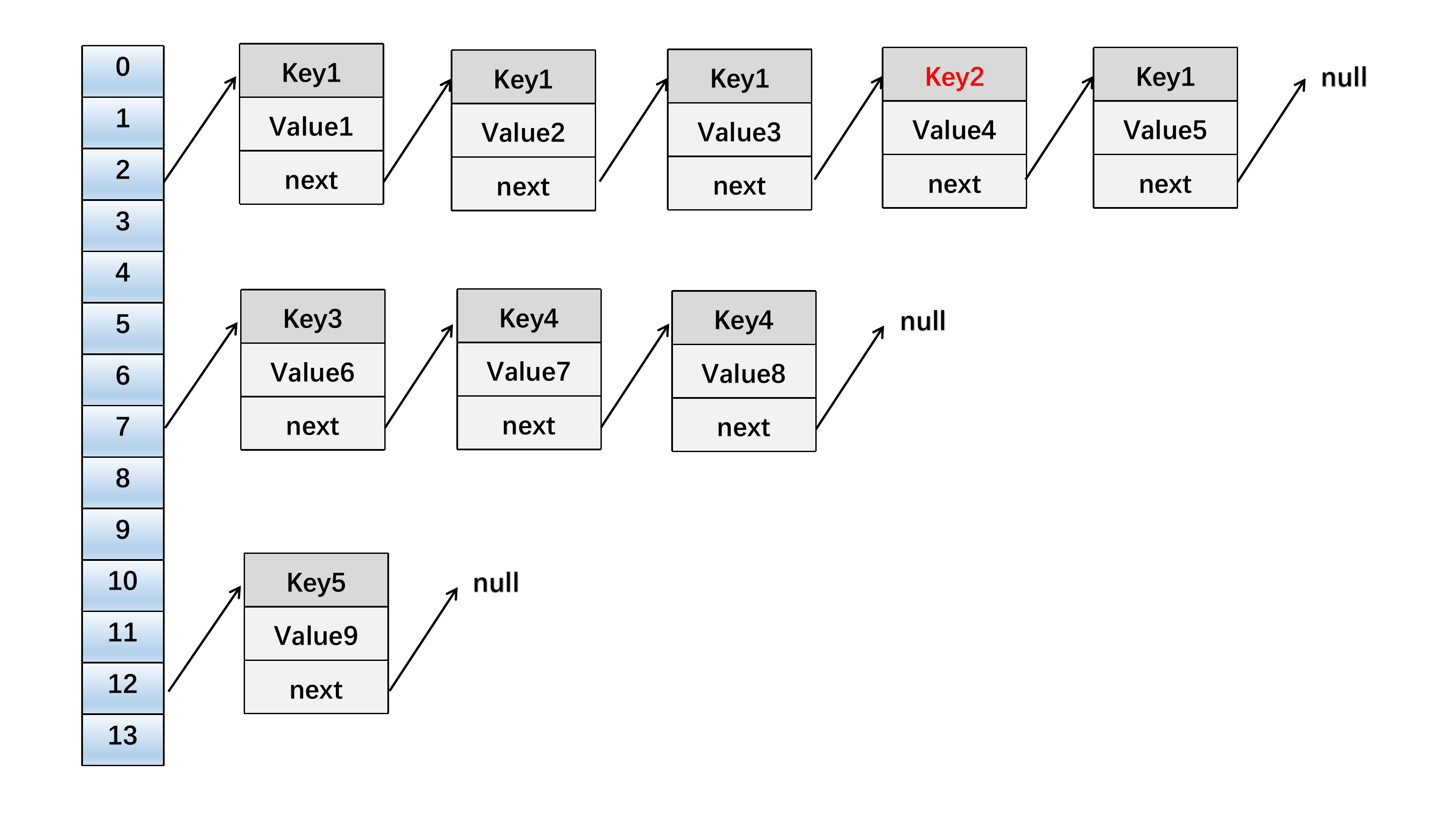}
  \caption{An example of hashmap approach.}
  \label{fig1}
\end{figure}

The disadvantage of the hashmap approach is three-folded:

\begin{enumerate}
\item[$\bullet$]A hashmap store the key-value pair in its full-length, which can be space-consuming.
\item[$\bullet$]Since the values are to be stored by chaining, the search can be expensive with the scale increasing (even if we employ a b-tree). 
\item[$\bullet$]It does not support double-side partial queries.
\end{enumerate}

\subsection{Double-side partial query}So far, we have discussed the case of key-value search, where keys are fixed and values are to be batch processed. However, in practical scenarios, the borderline of ``key" and ``value" can be vague. That is, given a set of key-value pairs with a fixed ``value", a batch query can also be processed on ``keys". For instance, in the context of CDN, one may either ask ``Does node A owns content 1, 4 and 6?", or ``Are nodes A, C and F the owners of content 3?" 
Apparently, this leads to lookups on element (A,1), (A,3), (A,5), and (A,3), (C,3), (F,3), respectively. In the end of literature\cite{3}, the authors also proposed an open question: ``How to extend the one-way approximate query into the two-way approximate query?" i.e. ``given an IP set $(IP_{1},IP_{2},...,IP_{n})$ as well as a time range $(t_{1},t_{2})$, is there any packet comes from somewhere close to an IP address in the given IP set in the given time range?" 

To formally describe the problem, we introduce the notion of double-side partial query to formulate the demand of batch queries on each possible component:
\begin{definition}
\textbf{Double-side partial query.} Given a 2-tuple denoted by $(x_{1},x_{2})$. A data structure is said to support double-side partial queries, if there exists an insertion/query pattern, such that the dependency between $x_{1}$ and $x_{2}$ holds, and the overall result of the entire element $(x_{1},x_{2})$ is unconcerned with the query sequence of $x_{1}$ and $x_{2}$.
\end{definition}

\textbf{$\bullet$ Baseline approach: Using two hashmaps.} To support double-side partial queries, a straightforward solution is to commute all the key-value pairs, then store the generated value-key pairs into another hashmap. When performing batch queries on a fixed value, employs the latter hashmap, vice versa. See Fig. \ref{fig2} for an example of the two hashmap solution. Needless to say, this solution is quite cumbersome, as the storage cost would be even doubled than using a single hashmap. 

\begin{figure}[!t]
  \centering
  \includegraphics[height=2in]{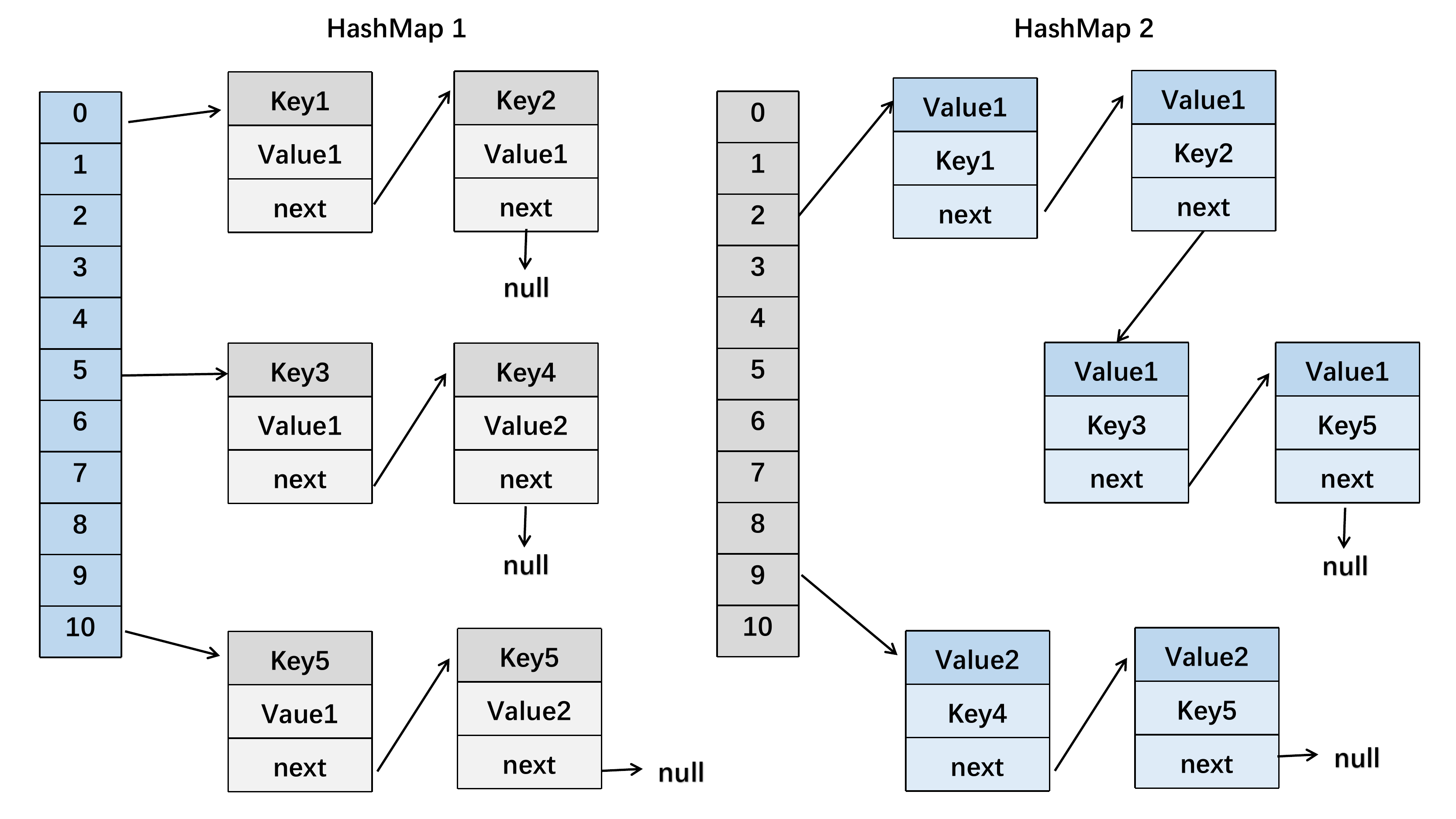}
  \caption{An example of double-hashmap approach.}
  \label{fig2}
\end{figure}

\section{Matrix Bloom Filter: A novel framework}In the last section, we have discussed some hashmap-based solutions that store raw data in their full-length. In this section, we propose a Bloom-filter based probabilistic structure to process 2-tuples with constant space and small errors. 
\subsection{Intuition}Let's recall the structure of a standard Bloom filter, which employs a bit-vector to perform insertions/queries on scalar elements. Now, the form of inserting elements turns out to be vector $(x_{1},x_{2})$, and we aim at developing a bit-matrix to perform batch 2-tuple partial existence tests on the vectors. It would be a nature extension of the standard Bloom filter, thus the operations on components shall also be similar.

Intuitively, the very beginning is a bit-matrix, whose bits are initially set to 0. Each row is treated as a standard 0/1 array Bloom filter with parameter $(m_{1},n_{1},k_{1})$, as well as each column with parameters $(m_{2},n_{2},k_{2})$. Suppose an element $A_{ij}(x_{i}^{1},x_{j}^{2}) \in \mathbb{S}$ is to be inserted into our matrix, where $n=|\mathbb{S}|$. Let $x_{i}^{1} \in \mathbb{S}_{1}$ and $x_{j}^{2} \in \mathbb{S}_{2}$, hence $n_{1}=|\mathbb{S}_{1}|$ and $n_{2}=|\mathbb{S}_{2}|$. Clearly, $max(n_{1},n_{2}) \leq n \leq n_{1}n_{2}$. Overall, $m_{0}=m_{1}m_{2}$ bits and $k_{1}+k_{2}$ hash functions are used to store $n$ elements from set $\mathbb{S}$.
\subsection{Element insertion/query}Consider an element chosen from $\mathbb{S}$, say, $A(x_{1},x_{2})$. As shown in Fig. \ref{fig3}, 
firstly, hash the $(x_{1})$ to $k_{1}$ row indices through the $k_{1}$ row hash functions. Next, hash the $(x_{2})$ to $k_{2}$ column indices through the $k_{2}$ column hash functions. Finally, place the $k_{0}=k_{1}k_{2}$ bits located at intersections of selected $k_{1}$ rows and $k_{2}$ columns to 1. The algorithm is shown in Algorithm \ref{alg1}.

\begin{figure}[!t]
  \centering
  \includegraphics[height=2.1in]{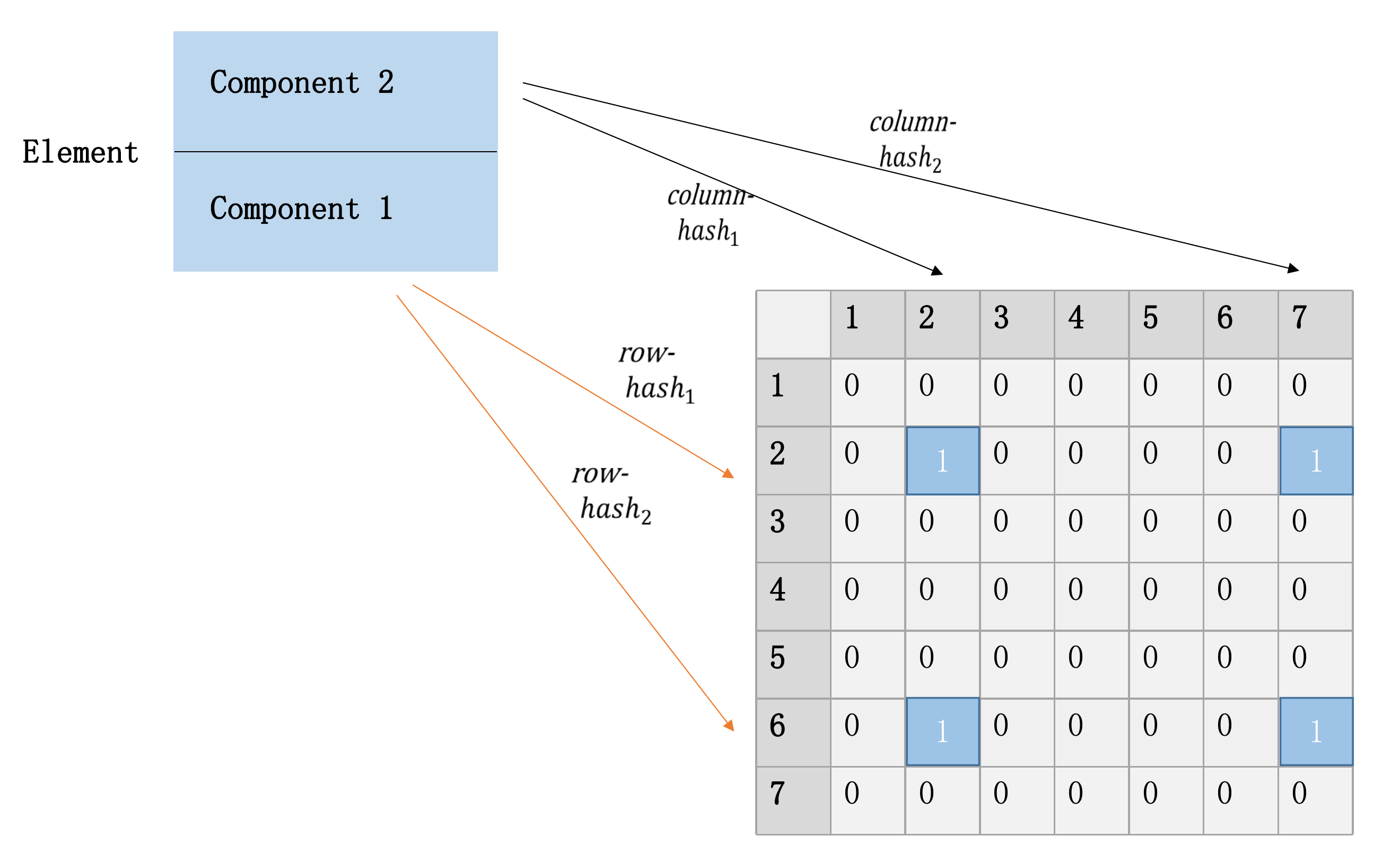}
  \caption{Insertion of an 2-dimensional element into matrix Bloom filter. Components are hashed into row/column indices and their intersections are set to 1.}
  \label{fig3}
\end{figure}

\begin{algorithm}[!h]
\caption{Element insertion of a matrix Bloom filter}
\label{alg1}
\begin{algorithmic}[1]
\State Input $A(x_{1},x_{2})$;
\State For i=1 to $k_{1}$; 
\State For j=1 to $k_{2}$;
\State $row\_array[i] \leftarrow rowhash_{i}(x_{1})$;
\State $column\_array[j]  \leftarrow columnhash_{j}(x_{2}) $;
\State $matrix(row\_array[i],column\_array[j]) \leftarrow 1$;
\State Endfor;
\State Endfor;
\end{algorithmic}
\end{algorithm}

\indent When any $A(x_{1},x_{2})$ is queried, hash $x_{1}$ and $x_{2}$ to the corresponding $k_{1}k_{2}$ positions, and check if all of them are 1. If yes, return ``positive", otherwise return ``negative", as described in Algorithm \ref{alg2}.

\begin{algorithm}[!h]
\caption{Element query of a matrix Bloom filter}
\label{alg2}
\begin{algorithmic}[1]
\State Input $A(x_{1},x_{2})$;
\State For i=1 to $k_{1}$; 
\State For j=1 to $k_{2}$;
\State $row\_array[i] \leftarrow rowhash_{i}(x_{1})$;
\State $column\_array[j]  \leftarrow columnhash_{j}(x_{2}) $;
\State If any $matrix(row\_array[i],column\_array[j])==0$ then outputs ``negative";
\State break;
\State EndIf
\State Endfor;
\State Endfor;
\State outputs ``positive";
\end{algorithmic}
\end{algorithm}

\subsection{False Positive Rate}Clearly, when some element is queried in the matrix Bloom filter, it returns false positive results, when all the $k_{1}k_{2}$ bits are set to 1, but the corresponding element is in fact not included in the membership set. In this part, let's derive a series of significant theoretical results on our matrix Bloom filter, based on the logic of standard Bloom filter. 

Firstly, let's determine the theoretical false positive rate on matrix Bloom filters:
\begin{theorem}False positives occur in matrix Bloom filters, with probability $(1-e^{\frac{-nk_{1}k_{2}}{m_{1}m_{2}}})^{k_{1}k_{2}}$.
\label{theorem1}
\end{theorem}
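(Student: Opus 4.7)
The plan is to mirror the textbook derivation of the standard Bloom filter false positive rate, viewing the matrix as a flat array of $m_1 m_2$ cells and treating each element insertion as a procedure that, collectively, sets $k_1 k_2$ cells. First I would fix an arbitrary cell $(i,j)$ and compute the probability that this cell remains $0$ after a single insertion. Because an insertion sets every cell lying at the intersection of the $k_1$ chosen rows and the $k_2$ chosen columns, the approach is to regard these $k_1 k_2$ set-bits as $k_1 k_2$ uniform independent selections from the $m_1 m_2$ positions (the same independence idealization used for the scalar BF), giving a per-insertion survival probability of $\bigl(1-\tfrac{1}{m_1 m_2}\bigr)^{k_1 k_2}$ for cell $(i,j)$.

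Next I would compound over the $n$ elements of $\mathbb{S}$. Since the hash functions are assumed independent across elements, the probability that $(i,j)$ is still $0$ after all insertions is $\bigl(1-\tfrac{1}{m_1 m_2}\bigr)^{n k_1 k_2}$. Using the familiar limit $(1-1/x)^x \to e^{-1}$, this is approximated by $e^{-n k_1 k_2 /(m_1 m_2)}$, and hence the probability that a given cell is $1$ is $1 - e^{-n k_1 k_2/(m_1 m_2)}$. Finally, a false positive on a query $A(x_1,x_2)$ occurs precisely when each of the $k_1 k_2$ queried cells equals $1$; treating these events as independent yields the claimed bound $\bigl(1 - e^{-n k_1 k_2/(m_1 m_2)}\bigr)^{k_1 k_2}$, completing the derivation.

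The main obstacle, as in the classical BF analysis, is that the independence assumptions invoked twice are not exact: within a single insertion the $k_1 k_2$ set cells have a Cartesian product structure (they come from the outer product of $k_1$ row hashes and $k_2$ column hashes, not from $k_1 k_2$ free draws), and the $k_1 k_2$ cells probed at query time share this same product structure, so they are correlated through the $k_1+k_2$ underlying hash outputs. The intended proof finesses this by adopting the standard idealization that hash outputs are uniform and mutually independent, inheriting the well-known approximation used for the scalar Bloom filter; I would state this modeling assumption explicitly at the start and then the calculation above goes through verbatim.
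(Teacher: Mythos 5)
Your derivation is essentially identical to the paper's Appendix A proof: the same chain of per-bit survival probabilities $1-\frac{1}{m_1 m_2}$, raised to $k_1 k_2$ per insertion and $n k_1 k_2$ overall, followed by the exponential approximation and the final power $k_1 k_2$. Your explicit acknowledgement that the $k_1 k_2$ cells share a Cartesian-product structure (and are thus correlated through only $k_1+k_2$ hash outputs) is a point the paper silently glosses over with ``suppose the hashes choose the positions equiprobably,'' so stating that idealization up front is a welcome clarification rather than a deviation.
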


\begin{proof}
See Appendix A.
\end{proof}

Next, let's decide the optimal number of hash functions and the lowest false positive rate of matrix Bloom filters: 

\begin{theorem}Given the relationship $k=\frac{m}{n}ln2$ as the theoretical optimal condition of standard Bloom filters. Simply substitute $k=k_{1}k_{2}, m=m_{1}m_{2}$ into this formula, we have the optimal condition of our matrix Bloom filter.
\end{theorem}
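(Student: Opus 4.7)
The plan is to reduce this directly to the one–variable optimization that is already known for the standard Bloom filter. By Theorem~\ref{theorem1}, the false positive rate of the matrix Bloom filter is
\[
f(k_1,k_2)=\bigl(1-e^{-nk_1k_2/(m_1m_2)}\bigr)^{k_1k_2}.
\]
The first observation I would make is that both $k_1$ and $k_2$ appear in $f$ only through the product $k_1k_2$, and similarly $m_1,m_2$ only through $m_1m_2$. So I would introduce the auxiliary variables $K:=k_1k_2$ and $M:=m_1m_2$ and rewrite
\[
f=\bigl(1-e^{-nK/M}\bigr)^{K},
\]
which is exactly the functional form of the standard Bloom filter false positive rate, with $K$ playing the role of the number of hash functions and $M$ the role of the bit-array length.

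Next I would carry out the usual minimization argument: take $\ln f=K\ln(1-e^{-nK/M})$, differentiate with respect to $K$, and set the derivative to zero. A short calculation (identical to the one for standard BFs, e.g.\ writing $p=e^{-nK/M}$ and using $(1-p)\ln(1-p)=-p\ln p$ at the stationary point) produces the optimum $e^{-nK/M}=1/2$, i.e.\
\[
K=\frac{M}{n}\ln 2,\qquad \text{equivalently}\qquad k_1k_2=\frac{m_1m_2}{n}\ln 2,
\]
which is precisely the stated substitution. Plugging this back into $f$ yields the minimum false positive rate $f_{\min}=(1/2)^{K}=(0.6185\ldots)^{M/n}$.

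The only subtle point — and the one I would flag as the real content of the argument — is the justification that one may legitimately treat $k_1k_2$ as a single free parameter. Because $f$ depends on $(k_1,k_2)$ solely through the product, the minimizing pair is not unique: any factorization $k_1k_2=\tfrac{m_1m_2}{n}\ln 2$ achieves the same optimal rate, so the "optimality" inherited from the standard BF only pins down the product. I would make this observation explicit, and note that in practice the choice of individual $k_1,k_2$ (and likewise the split of $M$ into $m_1\times m_2$) is instead governed by the application's workload pattern, foreshadowing the minimum-storage and maximum-adaptive variants discussed in the next section. No genuinely hard step is expected: once the FPR is recognized as a function of $k_1k_2$ alone, the theorem is a one-line consequence of the standard Bloom filter optimization.
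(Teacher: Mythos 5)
Your proposal is correct and follows essentially the same route as the paper's Appendix B: both recognize that the false positive rate from Theorem~\ref{theorem1} depends on the parameters only through the products $k_1k_2$ and $m_1m_2$, and then rerun the standard Bloom filter optimization to obtain $e^{-nk_1k_2/(m_1m_2)}=\frac{1}{2}$, hence $k_1k_2=\frac{m_1m_2}{n}\ln 2$ and $p_{\min}=(\frac{1}{2})^{k_1k_2}$. The only cosmetic difference is that the paper locates the optimum via the symmetry of $\ln p\,\ln(1-p)$ about $p=\frac{1}{2}$ rather than by explicit differentiation; your added remark that the optimum only pins down the product, not the individual factors, is a useful observation the paper leaves implicit.
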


\begin{proof}
See Appendix B.
\end{proof}

\subsection{Properties}Upon the discussions above, we are able to claim several significant properties of a matrix Bloom filter. 

Let's start with the discussion on false negatives, that is, the output of an element is reported to be negative in a membership test, but the element in fact belongs to the member set. It is easy to see, the insertion/deletion rules in each position are same to the standard Bloom filter, thus we have the assertion:

\textbf{Assertion 1.} \textit{One side error occurs in the matrix Bloom filter, that is, no false negatives would appear, just as a standard Bloom filter does.}

The operation on a matrix Bloom filter is componentwise, while their dependency remains. Furthermore, the final result is irrelevant to the query sequence of the components. Thus, we have the following two assertions:

\textbf{Assertion 2.} \textit{A matrix Bloom filter supports batch 2-tuple partial existence tests.}

\textbf{Assertion 3.} \textit{A matrix Bloom filter supports double-side partial queries.}

Compared with a standard Bloom filter of size $m$, if the same amount of bits $m=m_{1}m_{2}$ is used to build a matrix Bloom filter, the same theoretical optimal values/boundary conditions can be achieved. It is already shown matrix Bloom filters process batch query on one component more effectively. If both of the two components vary concurrently, the overall performance degrades. A natural extension of theorem 2 can be made to claim the following assertion:

\textbf{Assertion 4.} \textit{If queries on elements are no longer batchwise, e.g. both components of the inserting 2-tuples are randomly given, the performance of the matrix Bloom filter degrades to a standard Bloom filter of the same size.}

Furthermore, it is easy to extend the notion of matrix Bloom filter to other classes in the family of Bloom filters. For example, the notion of counting Bloom filter\cite{4} can be introduced directly to make elements deletable. Matrix Bloom filter can also be compressed during the transmission processes to achieve space-efficiency\cite{5}. However, detailed discussions are beyond the scope of this paper.

\section{Two variants of matrix Bloom filter}
\subsection{Background knowledge on datasets}In the last section, we have shown a matrix Bloom filter supports both 2-tuple partial existence tests and double-side partial queries. When a set of 2-tuples is queried, if a value on one component occurs multi-times, the corresponding hashing results can be immediately reused in the following process of other elements, without additional calculations. For instance, suppose there are five 2-tuples ${(1,2),(1,3),(1,4),(2,5),(3,6)}$, and we are performing query on them according to the rule of matrix Bloom filter. The first component 1 should be hashed to find out the row-indices we need, thus, for the first three elements, it is unnecessary to redo it two extra times. Hence, if we have a strong background knowledge on the datasets, our matrix Bloom filter performs batch queries more efficiently.

However, without knowledge on the statistical feature of inserting datasets, the problem can be complex. To illustrate this point, let's look equation.\ref{2} in appendix A, which is proved based on correctness of equation.\ref{1}. When the dataset holds some specific statistic distribution rather than uniformly given, correctness of equation.\ref{1} may no longer remains. As an example, let's make our discussion on a specific case: Consider a dataset constitutes from 10 elements, where all the $x_{1}$s are fixed to be 1, while $x_{2}$s vary. If a square matrix designed for 10 elements is allocated to store them, the rows where $x_{1}$s are hashed in will be much concentrated due to insertion of different $x_{2}$s. Clearly, equation.\ref{1} is no longer available in this case. Looking back to boundary conditions of theorem 1, the first thing remained to be discussed turns out to be: When the total amount of bits $m_{0}=m_{1}m_{2}$ and hash function $k_{1},k_{2}$ are given, how to allocate the exact proportion of these parameters?

In this section, we propose two different types of matrices working for different scenarios: the maximum adaptive matrix and the minimum storage matrix. The former one is adaptive to the the most general case with a higher storage cost. The latter one works under some specially given datasets and achieves the lowest possible storage cost (same to a standard Bloom filter).

\subsection{Maximum adaptive matrix}Notice that the structure of our matrix is not simply determined by the number of elements in set $\mathbb{S}$ or even $\mathbb{S}_{1}$ and $\mathbb{S}_{2}$ (i.e. $n,n_{1},n_{2}$), but the way how they are combined. Hence, for complex datasets, we need more information to decide a reasonable shape and size of the matrix structure. 

A natural solution is to consider the worst case, that is, for any given $n_1, n_2$, simply pre-allocate large enough spaces for any given $\mathbb{S}$ with parameters $n_1, n_2$, where the overall number of their combinations shall be at most $n_{1}n_{2}$. To be adaptive to the most general case, given $m_{1}$ and $m_{2}$, treat the rows and columns as dedicated standard Bloom filters, both of which are expected to hold the lowest possible false positive rate. Hence, we have 

\begin{equation*}
k_{1}=\frac{m_{1}}{n_{1}}ln2, k_{2}=\frac{m_{2}}{n_{2}}ln2
\end{equation*}
  
\textbf{Load factor.} The maximum adaptive matrix is relatively empty, see Fig. \ref{fig4} as a sample. Approximately 50\% rows/columns are set to 0, hence the load factor is approximately 25\%. In the worst case where $n=n_{1}n_{2}$, since the query in row and columns fulfils equation. \ref{2}, the load factor turns out to be approximately $50\% \times 50\%=25\%$.

In general cases, $n$ is in fact less than $n_{1}n_{2}$, thus the load factor is always less than $25\%$.

\begin{figure}[!h]
  \centering
  \includegraphics[height=2in]{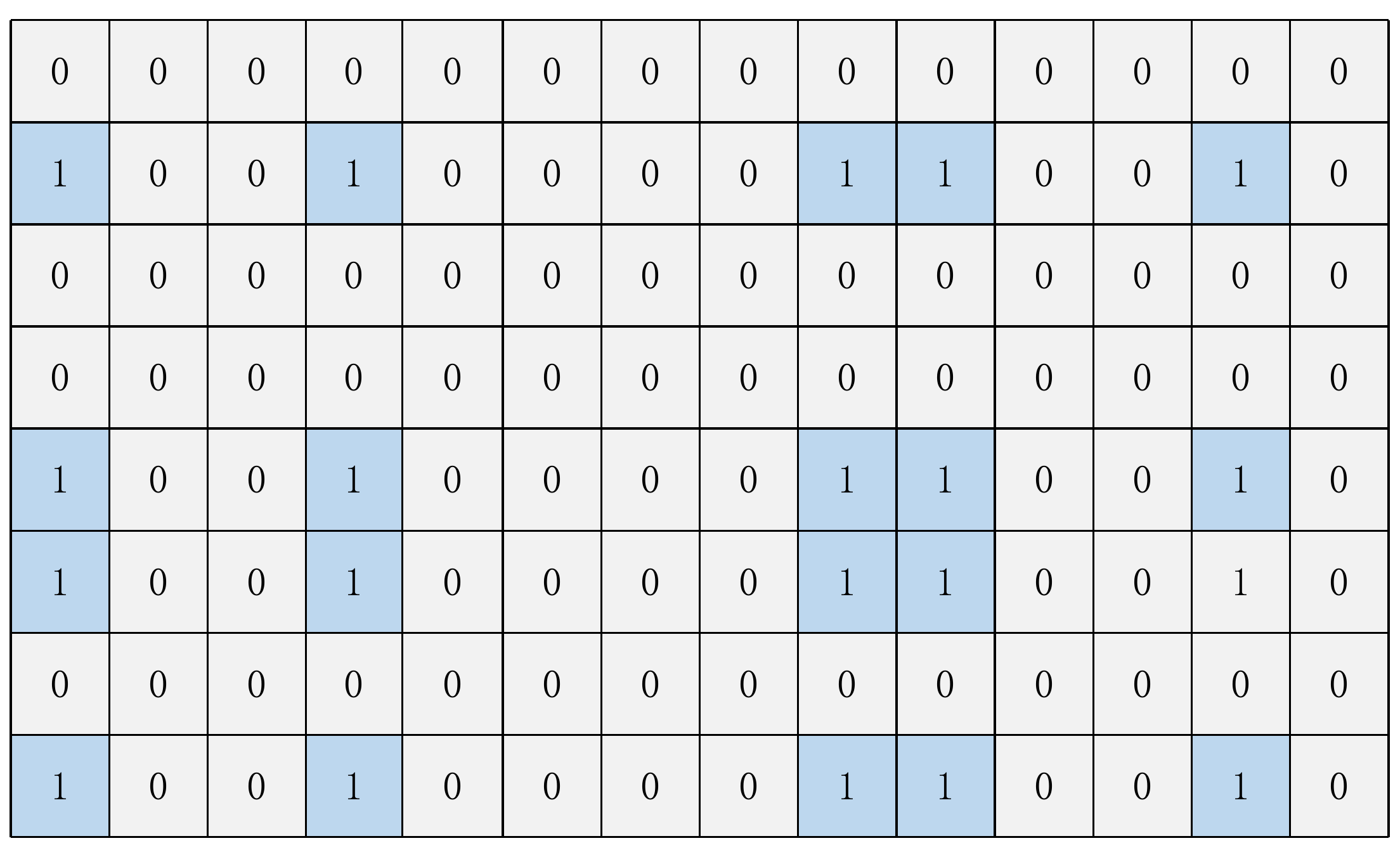}
  \caption{A sample of maximum adaptive matrix. The white rows/columns are entirely set to 0.}
  \label{fig4}
  \end{figure}

\textbf{False positive rate.} When queries in both rows and columns turn out to be false positive, the overall false positive result occurs. Hence,

\begin{equation*}
fpr_{mam}=f_1 \times f_2=(\frac{1}{2})^{k_{1}+k_{2}}
\end{equation*}

\textbf{Storage overhead.} As mentioned, the maximum adaptive matrix is relatively empty. It achieves a better performance of adaptive batch queries at the expense of a storage overhead. Let's derive the largest possible storage cost of the maximum adaptive matrix, and compare it with a standard Bloom filter.

Assume a standard Bloom filter is allocated to insert the same set of $n_1 n_2$ elements, where the same false positive rate $fpr_{mam}=(\frac{1}{2})^{k_{1}+k_{2}}$ is achieved. Let the size of standard Bloom filter be $m_{0}$, hence, the corresponding number of hash functions in the standard Bloom filter turns out to be $k_{0}=\frac{m_{0}}{n_{1}n_{2}}ln2$. Let $k_0=k_1+k_2$, we have:

\begin{equation*}
(\frac{m_{1}}{n_{1}}+\frac{m_{2}}{n_{2}})ln2=(\frac{n_{2}m_{1}+n_{1}m_{2}}{n_{1}n_{2}})ln2=(\frac{m_{0}}{n_{1}n_{2}})ln2
\end{equation*}

That is,

\begin{equation*}
n_1 m_2+n_2 m_1=m_0
\end{equation*}

Due to the mean value inequality, we have:

\begin{equation*}
2\sqrt{n_{1}n_{2}} \cdot \sqrt{m_{1}m_{2}} \leq m_{0}
\end{equation*}

Hence,

\begin{equation}\label{4}
m=m_{1}m_{2} \leq \frac{(m_{0})^{2}}{4n_{1}n_{2}}
\end{equation}

\textbf{Complexity.} For each lookup, there are overall $k_{1}k_{2}$ hashing/comparisons, therefore the complexity is simply $O(k_{1}k_{2})$. Since 

\begin{equation*}
k_{1}k_{2}=(\frac{m_{1}m_{2}}{n_{1}n_{2}})ln^{2}2
\end{equation*}

Substitute by equation.\ref{4}, we have:

\begin{equation*}
k_{1}k_{2}=(\frac{m_{1}m_{2}}{n_{1}n_{2}})ln^{2}2 \leq \frac{(m_{0})^{2}}{4(n_{1}n_{2})^{2}}ln^{2}2= \frac{k_{0}^{2}}{4}
\end{equation*}

However, in the case of batch query, the complexity can be further reduced. Consider the best case of key-value pairs, where a series of varying values corresponds to a same key. For this special dataset, the key is to be hashed only once, while the values are to be hashed at most $max(k_1, k_2)$ times. Therefore, the best complexity can be lowered to $O(max(k_1, k_2))$.

\subsection{Minimum storage matrix}As discussed in the context, the maximum adaptive matrix is adaptive to the most general case, that is, even if we have neither any background knowledge on the dataset, nor the lookup pattern, the maximum adaptive matrix could still work at the expense of a relatively empty structure, which results in a large storage overhead. However, in practical use, we may be aware of, or could foresee some statistical knowledge on the lookup pattern (e.g. batch query), or the dataset itself. With these information, the situation could be much simpler, so we can allocate a matrix in the lowest possible storage. In this part, we will discuss some typical datasets with strong background knowledge, and try to construct the corresponding matrices according to our theory.

\textbf{Case A.} Let's start at a special case where there exists a bijection between $\mathbb{S}_{1}$ and $\mathbb{S}_{2}$, that is, for any two different elements, there is no repeat on both of the two components. Clearly, in this case $n_{1}=n_{2}=n$, which means the two sets of different components contain the same amount of elements.

In this special case, the answer is simple. In this case, positions of $(k_{1},m_{1})$ and $(k_{2},m_{2})$ are reciprocal, since both $(k_{1},k_{2})$ and $(m_{1},m_{2})$ are commutable. Naturally, we employ a square matrix for insertion of the $n$ elements, where $k_{1}=k_{2}=k, m_{1}=m_{2}=m$.

We further point out that the theoretical result we described in theorem 1 is suitable to the model of Case A, since there is no repeat on any of the two components. If both of the 2 components are randomly given, for any two different elements, the probability of repeating on one dimension tends to be zero when $n$ becomes large.

\textbf{Case B.} In this case, let's consider a more general situation where elements in $\mathbb{S}$ can be represented as a weak combinations from $\mathbb{S}_{1}$ and $\mathbb{S}_{2}$. For the convenience of discussion, let $n_{1}>n_{2}$. In this case, each element in $\mathbb{S}_{2}$ combines with elements in $\mathbb{S}_{1}$ $j$ times. Elements in $\mathbb{S}_{1}$ are not allowed to repeat. Hence, $n_{1}=jn_{2}$.

\textbf{Element insertion/query.} Let's start at a special case where $n_{1}=2n_{2}$. We adopt a special hash to classify components in $\mathbb{S}_{1}$ into 2 types, each belongs to either $\mathbb{S}_{11}$ or $\mathbb{S}_{12}$. In each set, there are $n_{2}=\frac{n_{1}}{2}$ elements. For each set, we employ a square matrix as described in the context of Case A.

For any element yet to insert, we at first employ the special hash to find out which square matrix it belongs to. Then, in the determined square matrix, find out the corresponding rows for component in $\mathbb{S}_{1}$ as well as columns for component in $\mathbb{S}_{2}$.

For the two square matrices, components in $\mathbb{S}_{2}$ are always mapped into the same columns. Hence, it equals to simply stick the two matrices together from left to right to get a $2m_{2}\times m_{2}$ matrix. For the general case, when $n_{1}=jn_{2}$, similarly, the special hash should choose which of the $j$ sets an element belongs to. Then, for each set a square matrix is adopted, and the $j$ square matrices are stuck together to build a $jm \times m$ $j$-matrix. See Fig. \ref{fig5} as an example.

When any element is queried, similar rules are executed to find out if the $k_{1}k_{2}$ bits are 1.

\begin{figure}[!h]
  \centering
  \includegraphics[height=2.7in]{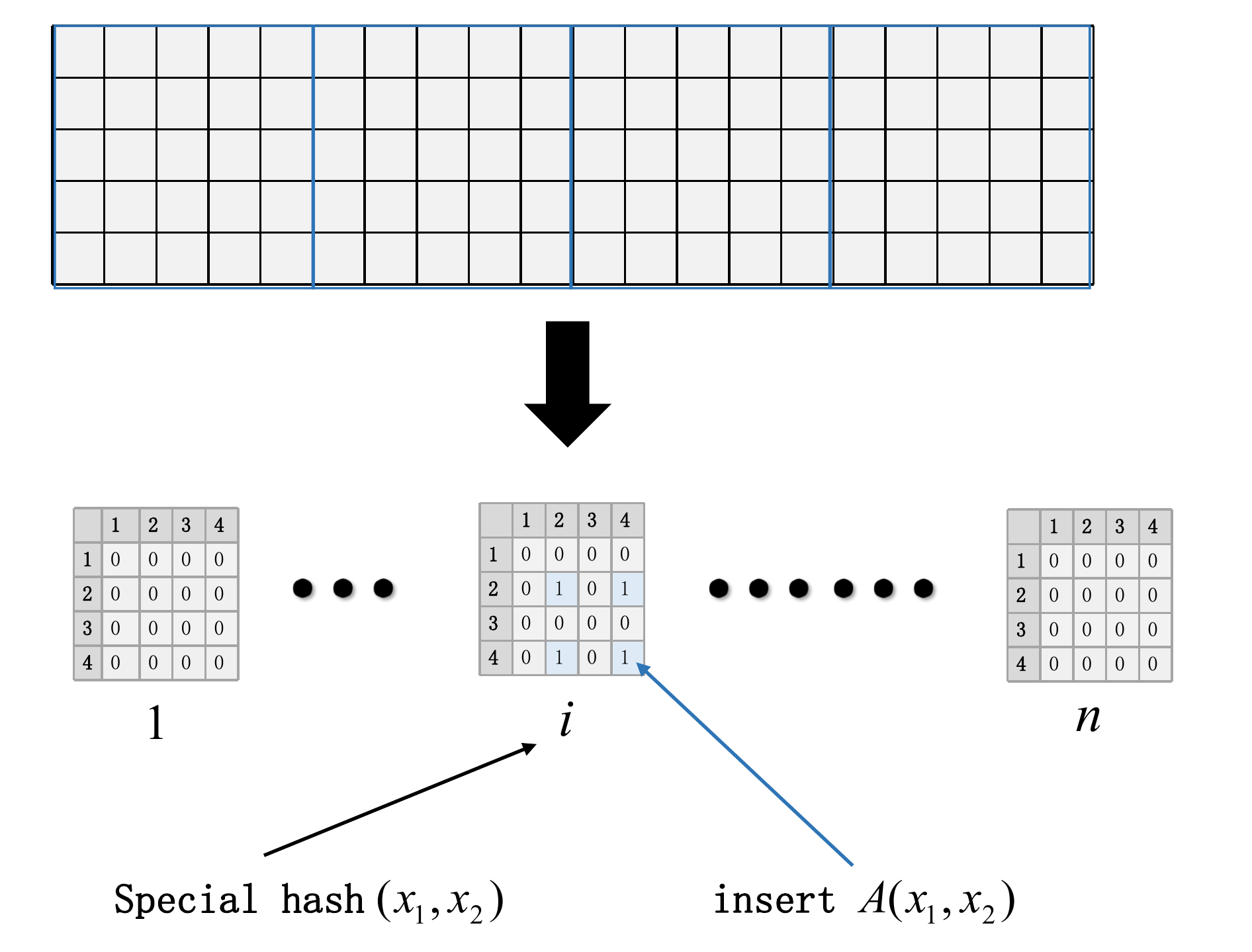}
  \caption{Element insertion/query in a j-matrix.}
  \label{fig5}
  \end{figure}
  
\textbf{False positive rate.} Firstly, let's discuss the special case where $n_{1}=2n_{2}$. Suppose overall $2m^{2}$ bits are employed to build this matrix. Now let's prove the false positive rate in this scenario equals to using a standard Bloom filter, where the same amount of bits ($2m^{2}$) is allocated to insert the $n_{1}^{2}$ elements.

Obviously, the false positive rate of that standard Bloom filter is $(\frac{1}{2})^{\frac{2m^{2}}{n_{1}^{2}}ln2}$. In the matrix, since the two square matrices shares the same false positive rate, only one of them need to be calculated. The false positive rate turns out to be:
\begin{equation*}
fpr_{2-matrix}=(\frac{1}{2})^{\frac{m^{2}}{n_{1}n_{2}}ln2}=(\frac{1}{2})^{\frac{2m^{2}}{n_{1}^{2}}ln2}
\end{equation*}

\indent Extending to the $j$-matrix case, we have the following theorem:
\begin{theorem}
When $jm^{2}$ bits are adopted to build a general $jm \times m$ $j$-matrix case, the false positive rate equals to a standard Bloom filter where the same amount of bits $(jm^{2})$ is used for insertion of the same amount of elements. (Say, $n_{1}=jn_{2}$)
\end{theorem}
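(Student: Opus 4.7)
My plan is to reduce the analysis of the $j$-matrix to the Case~A square-matrix analysis already in hand, and then line it up against a standard Bloom filter of the same total bit budget and element count. I would first observe that the special routing hash partitions $\mathbb{S}_1$ into $j$ disjoint buckets $\mathbb{S}_{11},\ldots,\mathbb{S}_{1j}$ of equal size $n_2$, so every inserted 2-tuple lives in exactly one $m\times m$ sub-matrix, and any query whose first component lies in $\mathbb{S}_{1i}$ is routed to the same sub-matrix as its would-be insertion. Consequently, the false-positive event for the $j$-matrix as a whole collapses to a false-positive event inside a single sub-matrix, and $fpr_{j\text{-matrix}}$ inherits the per-sub-matrix expression without any additional multiplicative or union-bound loss.

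Within one sub-matrix the situation is exactly Case~A: a bijective set of $n_2$ pairs between $\mathbb{S}_{1i}$ and $\mathbb{S}_2$ stored into an $m\times m$ matrix Bloom filter. Invoking Theorem~\ref{theorem1} with $m_1=m_2=m$ and $n=n_2$, and then Theorem~2 to specialise to the optimal hash-count regime $k_1 k_2 = (m^2/n_2)\ln 2$, yields a per-sub-matrix fpr that depends only on the ratio $m^2/n_2$. I would then write down the classical standard-BF expression for a filter of $M=jm^2$ bits storing $N=jn_2$ elements at its own optimum $k^{*}=(M/N)\ln 2$, and simply observe that $M/N = m^2/n_2$ coincides with the ratio appearing in the sub-matrix exponent. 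Matching exponents through the substitution used in Theorem~2 delivers the equality claimed in the theorem.

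The passage from the special case $j=2$ already treated in the excerpt to the general $jm\times m$ case is a direct replication: every sub-matrix is structurally identical, carries the same per-sub-matrix fpr in terms of $(m,n_2)$, and the routing hash contributes no extra error so long as the partition is preserved. The main obstacle I anticipate is not the algebra but justifying rigorously that the special hash behaves as a clean partition rather than as an independent uniform hash: if collisions of this hash were possible, one would have to control the variance of the resulting per-bucket load and argue that an imbalance of size $o(n_2)$ does not perturb the $(m^2/n_2)$ exponent. Under the paper's implicit assumption that the special hash produces exactly equal-sized buckets, this complication disappears and the proof reduces to the one-line exponent comparison above.
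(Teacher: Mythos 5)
Your proposal takes essentially the same route as the paper: reduce to a single $m\times m$ sub-matrix (all $j$ sub-matrices share the same false positive rate and the routing hash adds no extra error), apply the Case~A / Theorem~2 optimum inside it, and match the resulting exponent against a standard Bloom filter with the same bits-per-element ratio. The only divergence is bookkeeping --- you load each sub-matrix with $n_2$ tuples and the standard filter with $jn_2=n_1$, whereas the paper uses $n_1n_2$ per sub-matrix and $n_1^2$ in total --- but both normalizations collapse to the same one-line identity ($m^2/n_2=jm^2/(jn_2)$, resp.\ $m^2/(n_1n_2)=jm^2/n_1^2$), and yours is arguably the more consistent with the stated Case~B dataset.
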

\begin{proof}
The false positive rate of that standard Bloom filter is $(\frac{1}{2})^{\frac{jm^{2}}{n_{1}^{2}}ln2}$. Since the $j$ square matrices shares the same false positive rate, just one of them need to be considered. The false positive rate turns out to be:
\begin{equation*}
fpr_{j-matrix}=(\frac{1}{2})^{\frac{m^{2}}{n_{1}n_{2}}ln2}=(\frac{1}{2})^{\frac{jm^{2}}{n_{1}^{2}}ln2}
\end{equation*}
\end{proof}

\textbf{Storage cost.} Similarly, suppose a standard Bloom filter is allocated to insert the set of entire elements and the same false positive rate is achieved. Let the total amount of standard Bloom filter be $m_{0}$ and $k_{0}=\frac{m_{0}}{n_{1}n_{2}}ln2$. Let $k=k_{1}k_{2}$. Comparing with eq.\ref{2} we have $m=m_{0}, k=k_{0}$, which means it behaves no different to a standard Bloom filter.

\textbf{Complexity.} Since $k=k_{0}$, $O(k)=O(k_{0})$. When being batch queried, needless to say, the complexity is lowered to $O(max(k_1, k_2))$.

\subsection{Conclusion of this section}In this section, we introduce two variants of the matrix Bloom filter. The maximum adaptive Bloom filter is more adaptive to general cases, at the expense of a larger storage cost. The minimum storage matrix exploits some prior information to optimize the performance. The best possible storage cost is no different from using a standard Bloom filer. During batch processing, both of their complexity is lowered to $O(max(k_1, k_2))$. Detailed properties derived in this section are summarized in the following table.

\begin{table}[!h]
\centering
\caption{Comparison between two variants of matrix Bloom filter}
\begin{tabular}{|c|c|c|c|c|c|}
\hline
\diagbox{Attribute}{Type}&Maximum adaptive&Minimum storage\\
\hline
False positive rate&$(\frac{1}{2})^{k_{1}+k_{2}}$&=standard BF\\
\hline
Storage cost&$\frac{(m_{0})^{2}}{4n_{1}n_{2}}$&=standard BF\\
\hline
Load factor&$\leq25\%$&=standard BF\\
\hline
Complexity&$O(k_1k_2)$&=standard BF\\
\hline
Batch complexity&$O(max(k_1, k_2))$&$O(max(k_1, k_2))$\\
\hline
With prior knowledge&no&yes\\
\hline
\end{tabular}
\end{table}

\section{Experiments}In this section, we present the experiments of evaluating our proposed matrix Bloom filter and its two variants. Our experiments aim to verify the following things:

\begin{enumerate}
\item[$\bullet$]\textbf{False positive rates.} We derived a series of theoretical false positive rates on matrix Bloom filter (theorem \ref{theorem1}) as well as its two variants. We perform some empirical studies with real-world datasets to verify their correctness.
\item[$\bullet$]\textbf{Batch performance.} We proposed the notion of 2-tuple partial existence tests and claimed it is friendly for batch queries where one component is fixed. We also discussed using hashmap as a baseline solution. We perform the comparison on some real-world datasets to verify the effectiveness of our maximum adaptive matrix.
\item[$\bullet$]\textbf{Double-side partial query performance.} We have discussed that our matrix Bloom filter support double-side partial query and proposed using two hashmaps as a baseline solution. To verify the effectiveness of the matrix Bloom filter, we create a synthetic dataset of 2-tuples, and perform double-side partial query on the baseline solution and our maximum adaptive matrix as comparison.
\end{enumerate}

\subsection{Configurations}
\textbf{Datasets.} We employ two real-world datasets from the Bag of Words\footnote{http://archive.ics.uci.edu/ml/machine-learning-databases/bag-of-words/}, which constitutes from five text collections. We choose docword.kos (with 353160 key-value pairs from 3430 keys and 5851 values) and docword.nips (with 746316 key-value pairs from 1500 keys and 12375 values). 

We also create two types of synthetic datasets of 2-tuples. The first type is the full-repeating datasets, which includes four datasets constituted from the Euclidean cross product of two scalar datasets where elements are randomly numbers, to endow the tuple with repeating feature on both of the two components. The first scalar dataset is fixed, with number 1000. The second scalar varies, at 50, 200, 500, 1000, respectively. Hence, there are overall 50000, 200000, 500000, 1000000 2-tuples in the four datasets, respectively.

The second type is the no-repeating dataset, which contains only one dataset, which is also constituted from two scalar datasets of random numbers. However, a 1-to-1 mapping exists between the two scalar sets, that is, no repeating on both of the components for any two different 2-tuples.

As for the lookup sets, we employ two sections of data. The first section is data that are already known to exist in the member set. The second section is data that are already known NOT to exist in the member set. The overall dataset inserted/queried is a combination of the two sections.

\textbf{Hash functions.} For the Bloom filters, since a number of different hash functions is needed, we choose universal functions\cite{36} to map elements into Bloom filters. For any item $X$ with b-bits representation as  $X=<x_1,x_2,...,x_b>$. The $ith$ hash function over $X$  $h_{i}(x)$ is calculated as  $h_{i}(x)=(d_{i1}\times x_{1})\oplus(d_i2\times x_2)\oplus...\oplus(d_i3\times x_3)$, where `$\times$' is bitwise $AND$ operator and `$\oplus$' is bitwise $XOR$ operator. $d_{i}s$ are predetermined random numbers that has the same value range as the hash function.

For the hashmaps, only one hash function is needed. We simply employ the stochastic multiplier method.

\textbf{Performance metrics.} The false positive rate part is naive: We simply compare them with their theoretic values. As for the batch performance and the double-side partial query performance part, we just count how many ``compare" operations they perform. We also measure the runtime of double-side partial query as comparison to evaluate whether the matrix Bloom filter is cache-friendly.

\subsection{False positive rates}
\textbf{On generic matrix Bloom filters.} In this part, we will verify the correctness of theorem \ref{theorem1}. Firstly, let's introduce how the parameters are chosen. Since the look up pattern is not concerned in this part, we choose $n=2^{10}$ 2-tuples from the no-repeating dataset.

For a standard Bloom filter, insert the 2-tuples as an entirety. Let $k$ be different integers, and adjust $m$ to appropriate values with the formula $k=\frac{m}{n}ln2$. For a matrix Bloom filter, allocate the same amount of bits and elements where $k=k_{1}k_{2}$, as a comparison to the standard Bloom filter. For the matrix Bloom filter, the insertion rules of a single row/column is exactly the same to a standard Bloom filter.

The left part of Fig.\ref{fig6} shows the tendency of false positive rates of a standard Bloom filter varies with respect to $k$, and the right part is that of a matrix Bloom filter. Specially, the points in the matrix Bloom filter is more concentrated, as $m_1$ and $m_2$ are commutable, and renders the matrix Bloom filter be a square matrix. Notice that $k=k_{1}k_{2}$, if both $k_{1}$ and $k_{2}$ are required to be strict integers, fewer choice of $k$ is available. Hence, as an approximation, we take some points nearby $k=8$ and $k=16$ where the number of hash functions are fixed to integers, however, the corresponding $m_1$ and $m_2$ are calculated from the non-integer values of $k$. We can see from Fig.\ref{fig6} that the experimental values fit in the theoretic results. Hence, theorem \ref{theorem1} is verified, which implies the performance of our matrix Bloom filter behaves equally to the standard Bloom filter for no-repeat datasets.
  
\begin{figure}[!h]
\centering
\includegraphics[width=1.3in,angle=270]{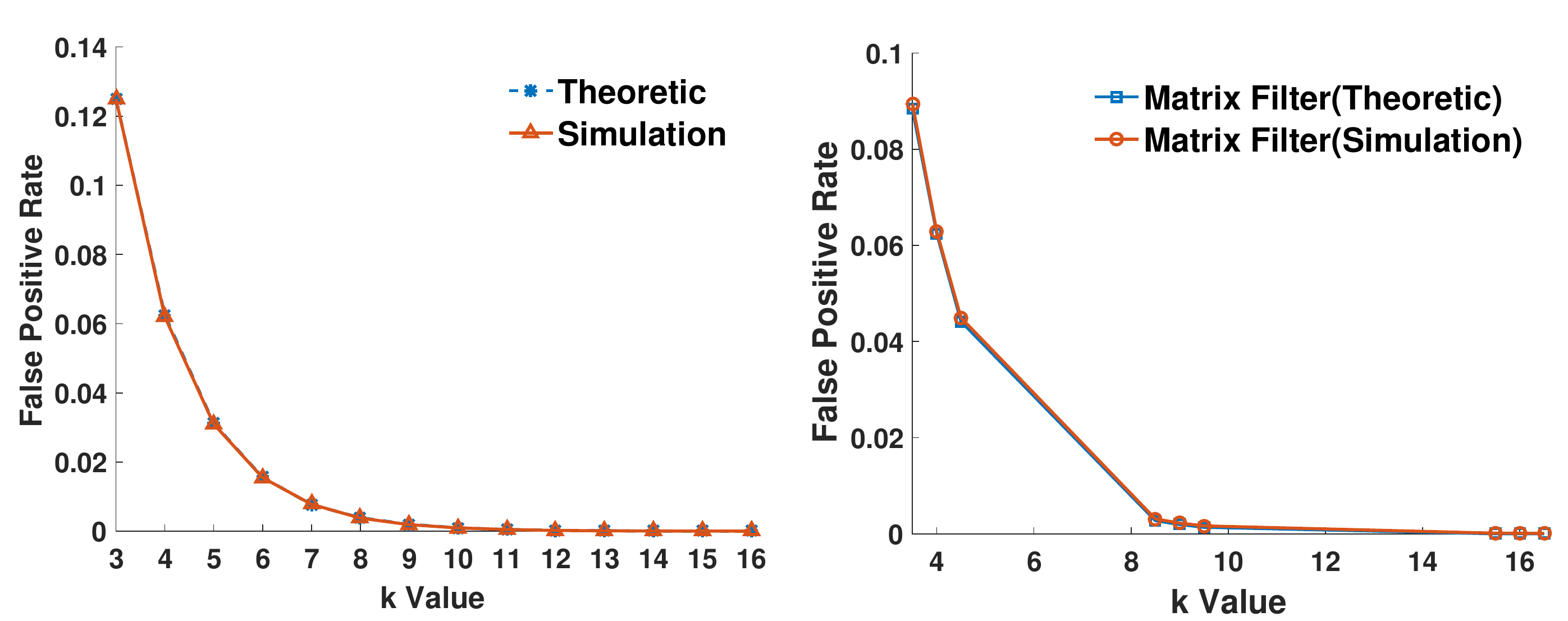}
\caption{Comparison of theoretical/experimental value of minimum false positive rate in a standard/matrix Bloom filter.}
\label{fig6}
\end{figure}

\textbf{On maximum adaptive matrix.} In this part, we will evaluate the performance of the maximum adaptive matrix. Let $k=k_{1}+k_{2}$, fix $n_{1}=256$, $n_{2}=512$ and let $k$ be different integers. $m_{i}$s are adjusted to appropriate values with the formula $k_{i}=\frac{m_{i}}{n}ln2, i=1, 2$, respectively.

Firstly let's test the false positive rate when the matrix is fully loaded. Hence, we employ the full-repeating dataset where $n=n_{1}n_{2}$. Let $\frac{m_{1}}{m_{2}}=1,\frac{1}{2},\frac{1}{4}$ respectively, and draw the tendencies of false positive rates varying with respect to $k$ on the left part of Fig. \ref{fig7}. In this picture, we can see on all the 3 curves, our maximum adaptive matrix well-fits the theoretic value.

Secondly, let's test the false positive rate when the matrix is relatively empty. Fix $k_{1}+k_{2}=6$, and adjust $m_{i}$s to appropriate values with the formula $k_{i}=\frac{m_{i}}{n}ln2, i=1,2$. The proportion of inserted elements (randomly picked from the full-repeating dataset where $n=n_{1}n_{2}$) is represented in the horizontal axis. The results are shown in the right part of Fig. \ref{fig7}. The tendency how the practical false positive rate decrease can be seen, when the matrix is not fully loaded.

\begin{figure}[!h]
\centering
\includegraphics[width=1.3in,angle=270]{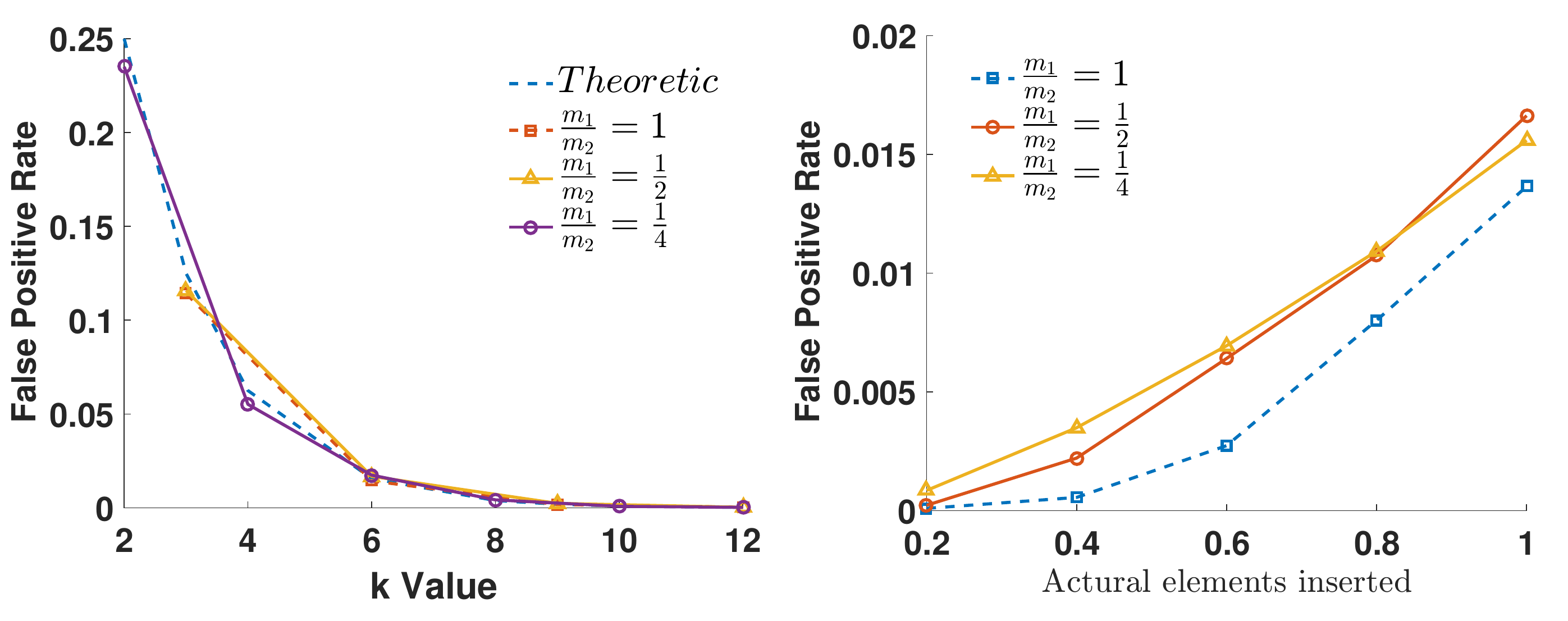}
\caption{False positive rate of maximum adaptive matrix varying with $k$ and actual elements inserted, respective to different $\frac{m_{1}}{m_{2}}$s.}
\label{fig7}
\end{figure}

Finally, let's test the load factor of our matrix with respect to the proportion of inserted elements. The parameters we choose are same to the former experiments. As shown in the table, the load factor of different $\frac{m_{1}}{m_{2}}$s are almost the same under the same proportion of inserted elements. Specially, when the matrix is fully loaded, the load factor is very near to 25\% as we've predicted in the context.

\begin{table}[!h]
\centering
\caption{Load factor \& proportion of elements of different $\frac{m_{1}}{m_{2}}$s}
\begin{tabular}{|c|c|c|c|c|}
\hline
\diagbox{Proportion}{Load factor}{$\frac{m_{1}}{m_{2}}$}&$\frac{1}{4}$&$\frac{1}{2}$&$1$\\
\hline
20\%&0.063681451&0.065117817&0.065625635\\
\hline
40\%&0.121001445&0.120029788&0.122250358\\
\hline
60\%&0.170028163&0.167312289&0.175074846\\
\hline
80\%&0.213055668&0.210467359&0.217504935\\
\hline
100\%&0.249106352&0.24721041&0.256099335\\
\hline
\end{tabular}
\end{table}

\textbf{On minimum storage matrix.} In this part, we evaluate the false positive rates of the minimum storage matrix. Let $k=\frac{k_{1}}{j}=k_{2}$. We fix $n_{1}=jn_{2}$, $n_{2}=144$ and let $k$ be different integers. $m_{2}$s is adjusted to appropriate values with the formula $k^{2}=\frac{m_{2}^{2}}{n_{2}}ln2$.

Let $j=2,10,40,100$ respectively and draw the tendencies of false positive rates varying with respect to $k^{2}$ on the left part of Fig. \ref{fig8}. In this picture, we can see on all the 4 curves our j-matrix work well and behave almost exactly the same whatever the value of $j$ is. Again, as an approximation, we take some points nearby $k^{2}$ where $k$s are actually not integers, however, the number of hash functions are fixed to integers. Similarly, the corresponding amount of bits are calculated from the non-integer values of $k$.

We further fix $k^{2}$ at $4,9,16$ respectively, and test the varying tendencies of false positive rates with respect to $j$. As shown in the picture, it is very steady, thus the result in the right part of Fig. \ref{fig8} is highly robust to the disturbance of $j$.

\begin{figure}[!h]
\centering
\includegraphics[width=1.4in,angle=270]{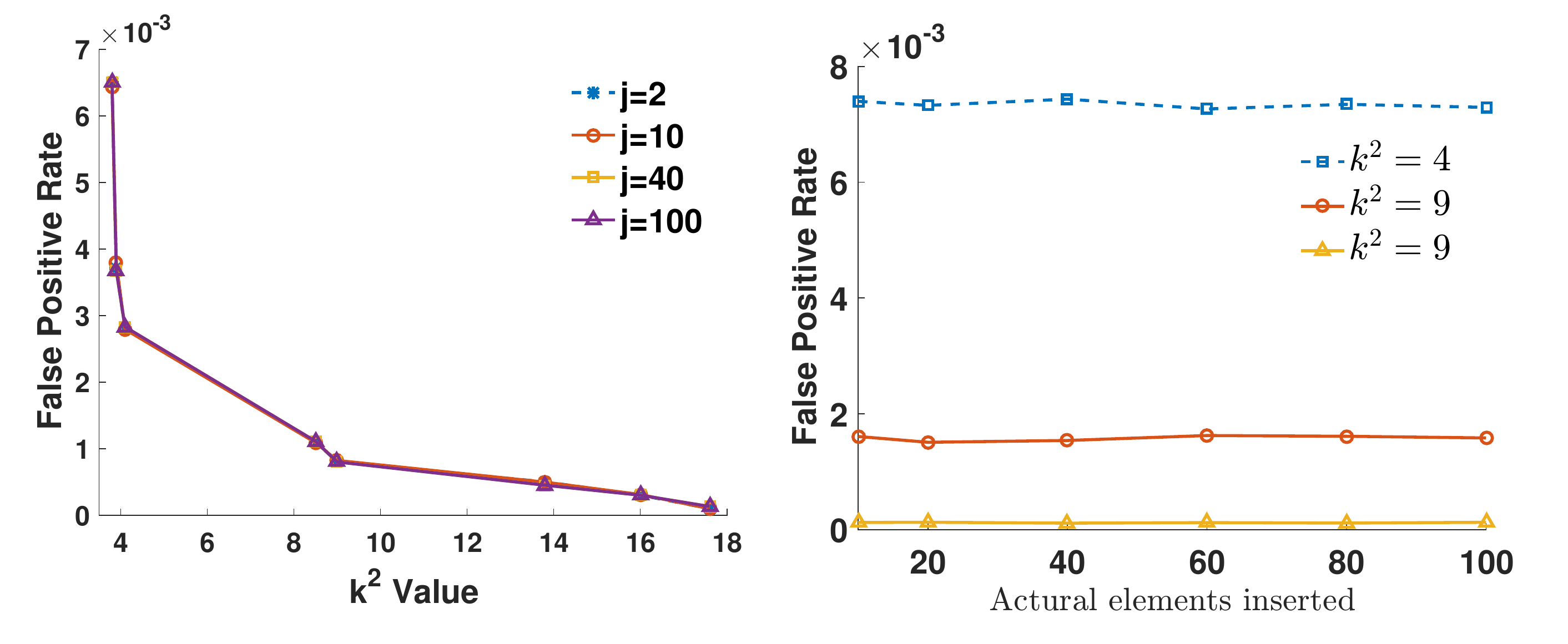}
\caption{False positive rate of $j$-matrix varying with $j$ and $k^{2}$, respective to different $k$s.}
\label{fig8}
\end{figure}

\subsection{2-tuple partial existence tests}In this part, we will evaluate the average comparison times performing batch 2-tuple partial existence tests on different datasets. As a contrast, the performance of both matrix Bloom filter and hashmap will be shown. in the remaining of this section, the number of hash functions in the matrix Bloom filters is fixed to four.

$\bullet$\textbf{On real-world datasets.} Let's perform batch 2-tuple partial existence tests on the mentioned real-world datasets KOS and NIPS, both of which are text collections in the form of bags-of-words. For each text collection, the contents of the first two columns are ``the number of documents" and ``the number of words in the vocabulary", respectively. We treat them as key and value, and perform insertions/queries on the matrix Bloom filter and the hashmap. The length of the hashmap is fixed at a value equal to the number of keys. As described, the lookup dataset involves two sections of data. The $x$ axis represents the proportion the first section occupies. The $y$ axis is the average comparison times. 

Fig. \ref{fig9} shows the tendencies of average comparison times at different length of hashmap varying with respect to the proportion. It involves six different curves, one of them denotes the matrix Bloom filter, while the remaining five denote different amount of data inserted into hashmaps, varying in the range of $1\sim10$ times of the number of keys. From this figure, we can see the matrix Bloom filter holds a constant of average comparison times. The greater proportion of section one data, the worse performance of hashmap is. Besides, more values corresponding to a same key results in more times of comparison, since the chain will be longer and results in a greater expense of traverse.

\begin{figure}[!h]
\centering
\includegraphics[width=1.4in,angle=270]{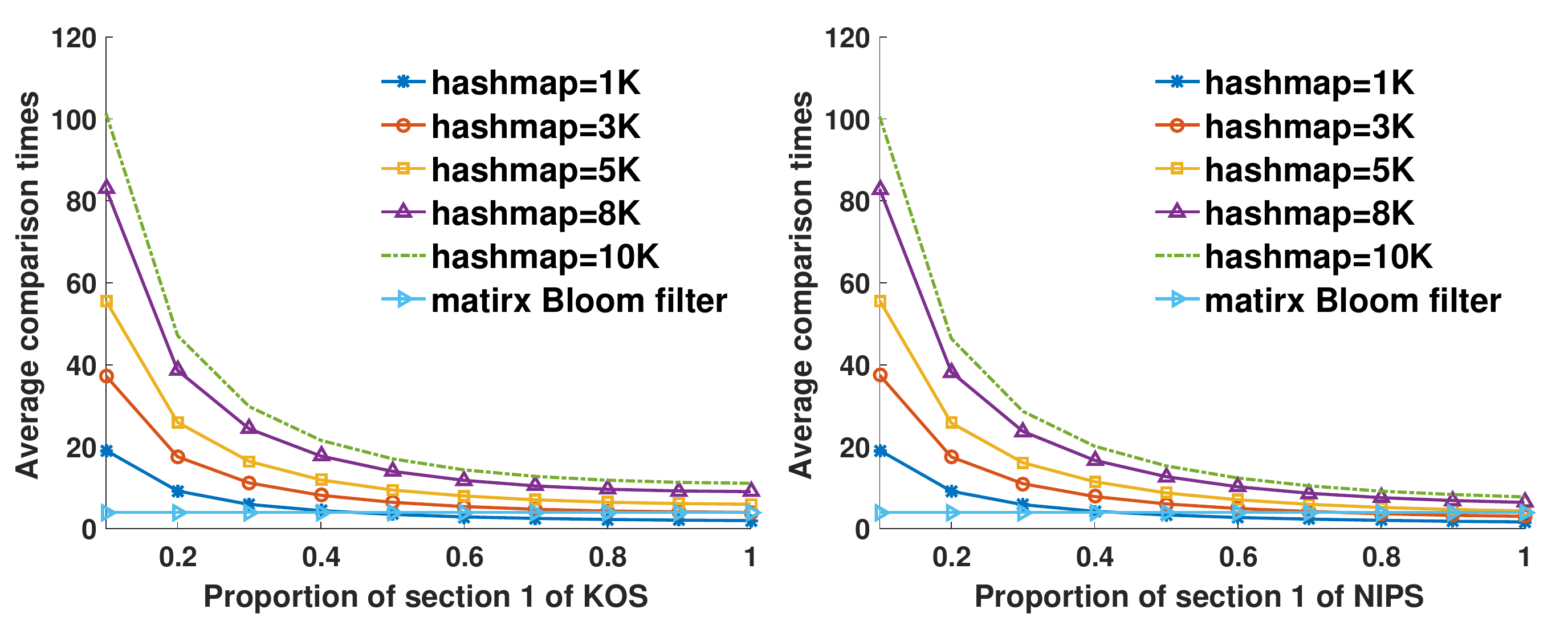}
\caption{2-tuple partial existence tests on hashmaps and matrix Bloom filters (maximum adaptive matrix).}
\label{fig9}
\end{figure}

We also test the runtime of the same experiments, however, in this time, the testing dataset only involves the first section. We generate dataset KOS1, KOS2 and NIPS1, NIP2 as subsets of KOS and NIPS, respectively. The size of the hashmaps are always adjusted to the same number of the keys.

The results are summarised in the following Table \ref{tab3}. Form this table, we can see the runtime of the matrix Bloom filter is almost steady, and is far less than that in hashmaps. As comparison, generally, the longer the chain in hashmap is, the more average time it costs, which is consistent to our theory predicted. Hence, with the scale of data increase, 
it is more economical to employ a matrix Bloom filter to perform batch partial queries on 2-tuples.

\begin{table}[!h]
\centering
\caption{Average runtime of matrix Bloom filter and hashmap, on split dataset KOS and NIPS.}
\label{tab3}
\begin{tabular}{|c|c|c|c|c|c|}
\hline
dataset  & key  & value & \textless{}key, value\textgreater{} & BF ($\mu$s)& Hashmap ($\mu$s)\\
\hline
KOS1  & 50   & 5963  & 25527                              & 1.00          & 14.5        \\
\hline
KOS2  & 100  & 3114  & 9586                               & 1.04          & 2.71         \\
\hline
KOS   & 3430 & 6906  & 353160                             & 1.05         & 2.97       \\
\hline
NIPS1 & 50   & 5851  & 25246                              & 1.02          & 8.2        \\
\hline
NIPS2 & 100  & 7703  & 49732                              & 1.03          & 8.02        \\
\hline
NIPS3 & 1500 & 12375 & 746316                             & 1.01         & 7.97  \\
\hline    
\end{tabular}
\end{table}

$\bullet$\textbf{On full-repeating datasets.} Let's decide the performance of batch 2-tuple partial existence tests on full-repeating datasets and make comparison between the matrix Bloom filter and the hashmap. The results are shown in Fig. \ref{fig10}, with a fixed number of keys (500), and the value varies from 100 to 700. The left part of Fig. \ref{fig10} is the average comparison times under different amount of (key, value) pairs, while the right part is the tendencies of average comparison times varying with respect to the size of hash table\footnote{With dataset (500,100) key-value pairs.}.

\begin{figure}[!h]
\centering
\includegraphics[width=1.4in,angle=270]{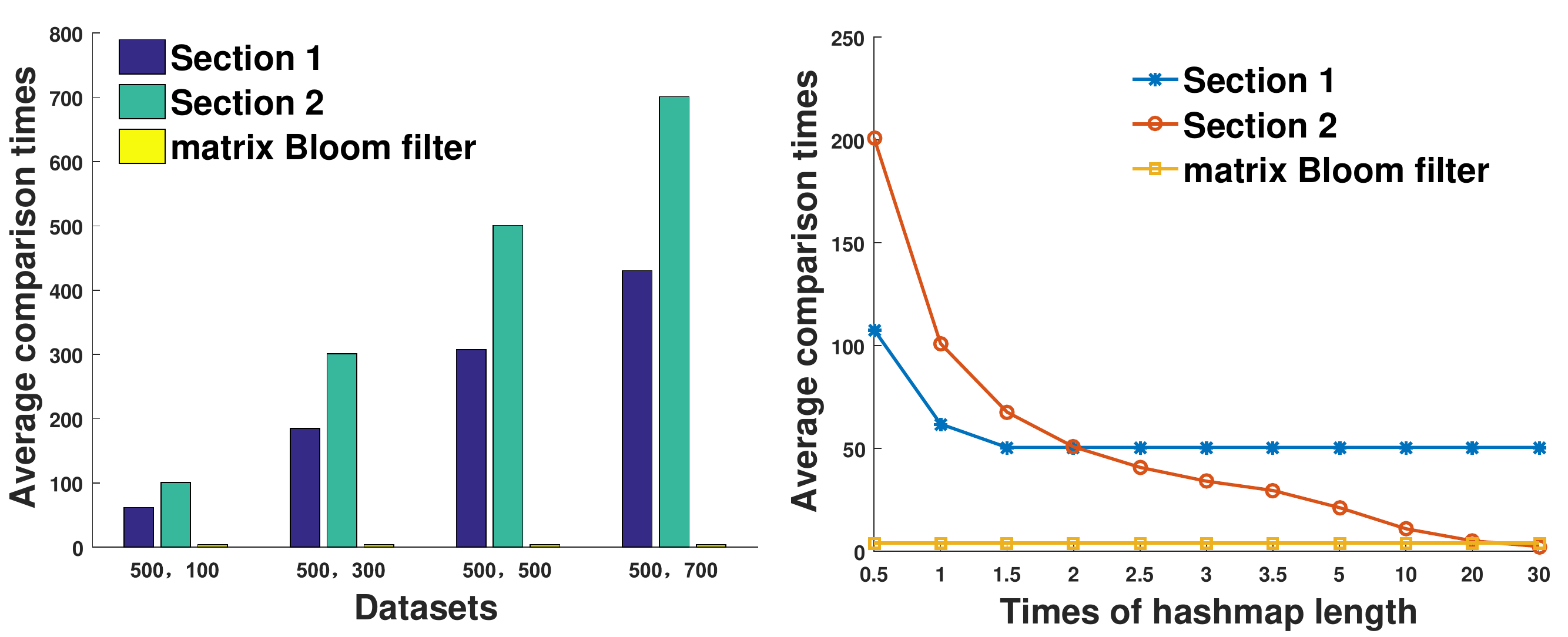}
\caption{The average comparison times on full-repeating datasets, with matrix Bloom filters and hashmaps.}
\label{fig10}
\end{figure}

In the left part of Fig. \ref{fig10}, we can see our matrix Bloom filter also holds a constant comparison time. The right part is the tendency where inserted dataset is fixed, while the length of hashmap varies. The $y$ axis also denotes to the comparison times, and the $x$ axis denotes to the times of number of the key. The two curves represents the mentioned two sections of lookup datasets, respectively. With the increase of hashmap length, dataset section one decreases at first, while keeps steady afterwards, since the collision gradually disappears, and the queries meet the same set of chains. Dataset section two decreases to 1 rapidly, as the queries meet empty buckets (of one time comparison) with higher probability when the hashmap tends to be empty.

\subsection{Double-side partial queries}In this part, we perform double-side partial queries on full-repeating datasets, and test the comparison time and runtime of hashmaps and our matrix Bloom filter. Tuples are inserted in the structures, and the queries are performed in the sequence of (key, value) and (value,key), respectively. The $x$ axis denotes to datasets generated from different numbers of key-value pairs.

As shown in the left part of Fig. \ref{fig11}, the larger the overall dataset is, the more comparison times it holds. Besides, the comparison times of (key, value) sequence is less than that of (value,key) sequence, since the chain of the former is shorter than that of the latter. As comparison, the comparison time of matrix Bloom filter keeps constant. The right part of Fig. \ref{fig11} shows the runtime of the same experiment performed on those datasets, and the same conclusion can be deduced.

\begin{figure}[!h]
\centering
\includegraphics[width=1.4in,angle=270]{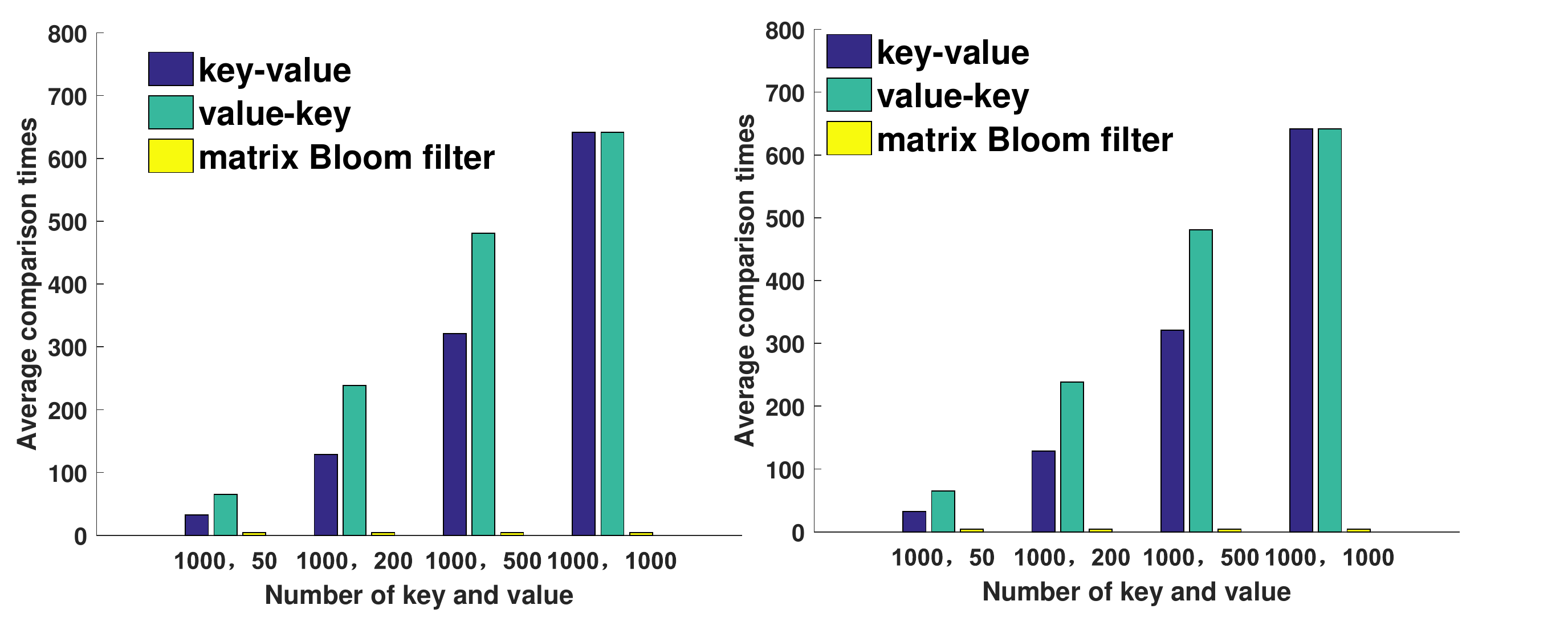}
\caption{Average comparison times and runtime of matrix Bloom filter and hashmap, on dataset KOS and NIPS.}
\label{fig11}
\end{figure}

\section{Related Work}There were some prior works attempted to extend the form of Bloom filter into the matrix form. Geravand et al. proposed a bit matrix to detect copy-paste contents in a literature library\cite{6}. Wang et al. proposed a Bloom filter in a matrix form as a ``real" matrix representation of a new type of Bloom filter\cite{7}. It employs $s$ rows of Bloom filters with $k$ hashes along with an additional special hash to decide which row of Bloom filter should be chosen to insert. In the mentioned works, the proposed data structures are nominally called ``matrix Bloom filter", since the authors formally replace the initial bit-vector in a standard Bloom filter by a bit-matrix. However, they behave like a group of co-working standard Bloom filters rather than a nature extension of Bloom filters to a matrix form. Diversely, the matrix Bloom filter in our paper is truly a higher dimensional version of the standard Bloom filter, where both rows and columns in the matrix work like independent Bloom filters.

Multi-set problem is also close to the 2-tuple batch lookup problem, which can be defined as the following form (See appendix C for more details): Given $v$ sets $U_{1}, U_{2}, ...U_{v}$, where any two of them have no intersection. Given an item $a$, which set does $a$ belong to? Aiming at this problem, Hao et al. proposed a well-known solution, namely, the combinational Bloom filter\cite{11}. It considers the membership query of S which has multiple groups, whose results should tell which group the input element belongs to, without the prior knowledge of the number of elements in each group. The performance is not quite ideal, since a large number of hash functions and memory accesses are required, and the probability of misclassification in combinatorial Bloom filter is high, as false positives on each bit will result in incorrect group code. As another solution, Yu et al. proposed vBF\cite{8} in the framework of Bloom filter. It builds $v$ Bloom filters for the $v$ sets. When any element $a$ is queried, vBF executes query in all the $v$ BFs. If $BF_{i}$ reports true, $a$ is deemed in set $i$, otherwise no. There are also many other Bloom filter-based solutions to the multi-set problem: the coded Bloom filter\cite{9}, the Bloomier filter\cite{10}, kBF\cite{12}.

Partial evaluation is also a paradiam closely related to our matrix Bloom filter, which was first described for a programming language by Yoshihiko Futamura in 1983\cite{13}. It is a technique to produce new programs that run faster than the originals while being guaranteed to behave in the same way. For example, assume that there are two input data $(x_{1},x_{2})$ for a given 2-element function $f$, and $f(x_{1},x_{2})$ turns out to be the computation result of $f$. Suppose the data set to be $\{(1,2),(1,3),(1,4),(2,5),(3,6)\}$. It is easy to see that $x_{1}=1$ appears frequently, hence it is more efficient to pre-generate a new 1-element function $f_{1}(x_{2})$ where $f_{1}(x_{2})=f(1,x_{2})$, and calculate the remaining result by substituting all $x_{2}s$ into $f_{1}$. Our matrix Bloom filter is a specific partial evaluation algorithm in the context of 2-tuple query, whose meaning and functionality is more than ``efficiency".

As a space-efficient probabilistic data structure to support membership queries on a set of elements, Bloom filters are generally accepted to be useful for cases where space is limited and membership query is required with allowable false positives. There are extensive applications of Bloom filters in the area of databases, e.g. query and search\cite{14,15}, privacy preservation\cite{16,17}, key-value store\cite{17,18}, content synchronization\cite{19,20,21,22,23}. Bloom filters can also find their applications in a wide range of networking\cite{24,25,26,27,28,29,30,31,32,33,34,35}.

\section{Conclusion}In this paper, we propose a novel framework of matrix Bloom filter as well as its two variants, i.e. the maximum adaptive matrix and the minimum storage matrix, as a high-dimensional extension of the standard Bloom filter. Through theoretical and empirical analysis, we prove and show it can efficiently process 2-tuple batch insertions and queries. 

\section*{Acknowledgement}Corresponding author: Dagang Li. Yue Fu and Rong Du contributed equally to this paper.

\section*{Appendix}
\subsection{Proof of theorem \ref{theorem1}}
\begin{proof}
Suppose the hashes choose the positions equiprobably. In a single hash insertion, for any bit the probability that it is not set to 1 turns out to be:

\begin{equation}\label{1}
p_{1}=1-\frac{1}{m_{1}m_{2}}
\end{equation}

After all the $k_{1}k_{2}$ executions the probability that the bit is not set to 1 is:

\begin{equation*}
p_{2}=(1-\frac{1}{m_{1}m_{2}})^{k_{1}k_{2}}
\end{equation*}

When $n$ elements are inserted, the probability that the bit is still 0 is:

\begin{equation*}
p_{3}=(1-\frac{1}{m_{1}m_{2}})^{nk_{1}k_{2}}
\end{equation*}

Hence, the the probability that the bit is set to 1 is:

\begin{equation*}
p_{4}=1-(1-\frac{1}{m_{1}m_{2}})^{nk_{1}k_{2}}
\end{equation*}

The false positive result occurs when all the queried $k_{1}k_{2}$ bits are set to 1:

\begin{equation*}
p_{5}=(1-(1-\frac{1}{m_{1}m_{2}})^{nk_{1}k_{2}})^{k_{1}k_{2}} \approx (1-e^{\frac{-nk_{1}k_{2}}{m_{1}m_{2}}})^{k_{1}k_{2}}
\end{equation*}
\end{proof}

\subsection{Proof of theorem 2}
\begin{proof}
Notice  

\begin{equation}
p_{5}=e^{k_{1}k_{2}ln(1-e^{\frac{-nk_{1}k_{2}}{m_{1}m_{2}}})}
\end{equation}

Let 

\begin{equation*}
p=e^{\frac{-nk_{1}k_{2}}{m_{1}m_{2}}} 
\end{equation*}

\begin{equation*}
g=k_{1}k_{2}ln(1-e^{\frac{-nk_{1}k_{2}}{m_{1}m_{2}}})=\frac{-m_{1}m_{2}}{n} lnpln(1-p)
\end{equation*}

 Clearly, $p_{5}$ arrives at its minimum when $g$ reaches its minimum. Due to symmetry, the following restriction holds: 
\begin{equation}\label{2}
p=\frac{1}{2}
\end{equation}

It means when 50\% bits are occupied with 1, the minimum is reached. Hence, we have the boundary condition:
\begin{equation*}
k_{1}k_{2}=\frac{m_{1}m_{2}}{n}ln2
\end{equation*}

As well as the minimum value of $p_{5}$:

\begin{equation*}
p_{min}=(\frac{1}{2})^{k_{1}k_{2}}
\end{equation*}

Compare with the corresponding results of the standard Bloom filter:

\begin{equation}\label{3}
p=\frac{1}{2}, k=\frac{m}{n}ln2, p_{min}=(\frac{1}{2})^{k}
\end{equation}

Replace $k$ and $m$ in eq.\ref{3} by $k_{1}k_{2}$ and $m_{1}m_{2}$, then complete the proof.
\end{proof}

\subsection{On multi-set query problems}In the related work, we have discussed the notion of multi-set query. We further illustrate it is naturally a 2-tuple partial existence test problem that could be solved by our matrix Bloom filter. Let's start with vBF as a baseline solution, which is in fact a special case of our matrix Bloom filter. Furthermore, we will define the multi-set query problems in a more general form, and conclude it into the framework of our matrix Bloom filter.

$\bullet$\textbf{vBF: special multi-set query.} Notice that multi-set problem can be induced into 2-tuple partial existence test problem, where 2-tuples are denoted by (element name, set number). We say vBF is in fact a special case of our matrix Bloom filter: vBF supports 2-tuple partial existence tests, while does not support double-side partial queries. During insertions/queries, we at first find out the corresponding set number that the element name belongs to, and then inserts/queries that element name into a standard Bloom filter. In our matrix Bloom filter, this is realised by a series of row hash functions that map the set number to the corresponding Bloom filters yet to insert the element name. 

However, in the scene of vBF, it has been priorly aware that the involved set numbers are determined and pre-given. Hence, a single special row hash function can be used to map set number $i$ to the $i$th BF, thus it is convenient to answer such question ``given elements $a,b,c,d$, do they belong to set 2?" We can simply find out the set number we need, and check the element name into the corresponding Bloom filters. However, the opposite queries on a fixed element name is not allowed in the same pattern: Relatively, a vBF can only query an 2-dimensional element of sequence (set number, element name). When question comes in the form of ``given an element $a$, which set does it belong to?", it can only traverse all the BFs to hunt for that answer.

All in all, we can define the special multi-set problem in a more precise form: given $v$ \textbf{pre-determined} sets and an element $e$, which set does $e$ belong to? No doubt that vBF is designed for this scene. Of course matrix Bloom filter can work in a special case (as a vBF) to deal with such special multi-set problems.

$\bullet$\textbf{Matrix Bloom filter: General multi-set query.} It can be seen from the previous discussions, our matrix Bloom filter is adaptive to a more general case where the involved set numbers remain undetermined. In other words, the set number itself becomes an object yet to be searched in a range, just like the element name. In this scenario, it is required to support double-side queries. To make it clear, we propose the following definition of general multi-set membership test problem:

\begin{definition}
General multi-set membership test problem. Given a series of sets $\mathbb{A}_{1},\mathbb{A}_{2},...\mathbb{A}_{x} \in \mathbb{U}$, where $1,2,...x\in\mathbb{X}$. Given a series of  element names $e_{i}\in \mathbb{A}_{\gamma_{i}}$, where $\gamma_{i}\in\mathbb{X}$. Then, $\mathbb{B}=\cup\mathbb{A}_{\gamma_{i}}$ is a proper subset of $\mathbb{U}$ that \textbf{remains unknown} in the process.\\
1. Which set does $e_{i}$ belong to?\\
2. For set $\mathbb{A}_{\gamma_{i}}$ what's the name of all element inserted in it?
\end{definition}

Clearly, a vBF works reluctantly for this problem if we let $v=x$, which means $x$ standard Bloom filters of the same length shall be pre-determined. In other words, the overall size is relevant to the scale of $\mathbb{X}$. Suppose $n_{1}$ elements and the corresponding set numbers are to inserted into a matrix Bloom filter. Let $|\mathbb{B}|$ be $n_{2}$. Then, $max(n_{1},n_{2})$ will be the number of the 2-dimensional elements (set number, element name). Hence, the overall size of matrix Bloom filter will be relevant to the scale of max($|\mathbb{B}|$, $n_{1}$), which may be more space efficient when the involved set numbers are rather small in the range of all possible values.

\end{document}